\begin{document}
\hypersetup{pageanchor=false}
\title{Fixed-Parameter Tractability of Hedge Cut\thanks{The research leading to these results has received funding from the Research Council of Norway via the project BWCA (grant no. 314528) and the European Research Council (ERC) via grant LOPPRE, reference 819416.}}

\author{
Fedor V. Fomin\thanks{
Department of Informatics, University of Bergen, Norway. Emails: \texttt{fomin@ii.uib.no}, \texttt{petr.golovach@uib.no}}
\and
Petr A. Golovach\addtocounter{footnote}{-1}\footnotemark{}
\and
Tuukka Korhonen\thanks{Department of Computer Science, University of Copenhagen, Denmark. (Work done while at the University of Bergen.) \texttt{tuko@di.ku.dk}
}\and
Daniel Lokshtanov\thanks{Department of Computer Science, University of California Santa Barbara, USA. \texttt{daniello@ucsb.edu}}
\and
Saket Saurabh\thanks{The Institute of Mathematical Sciences, Chennai, India and University of Bergen, Norway. \texttt{saket@imsc.res.in}}
}

\maketitle

\thispagestyle{empty}

\begin{abstract}
In the {\hcut} problem, the edges of a graph are partitioned into groups called hedges, and the question is what is the minimum number of hedges to delete to disconnect the graph. Ghaffari, Karger, and Panigrahi~[SODA~2017] showed that \hcut can be solved in quasipolynomial-time, raising the hope for a polynomial time algorithm. Jaffke, Lima, Masar{\'{\i}}k, Pilipczuk, and Souza~[SODA~2023] complemented this result by showing that assuming the Exponential Time Hypothesis (ETH), no polynomial-time algorithm exists.
In this paper, we show that \hcut  is fixed-parameter tractable parameterized by the solution size $\ell$ by providing an algorithm with running time $\binom{\OO(\log n) + \ell}{\ell} \cdot m^{\OO(1)}$, which can be upper bounded by $c^{\ell} \cdot (n+m)^{\OO(1)}$ for any constant $c>1$. 
This running time captures at the same time the fact that the problem is quasipolynomial-time solvable, and that it is fixed-parameter tractable parameterized by $\ell$.
We further generalize this algorithm to an algorithm with running time $\binom{\OO(k \log n) + \ell}{\ell} \cdot n^{\OO(k)} \cdot m^{\OO(1)}$ for \hkcut. 
\end{abstract}

\newpage
\hypersetup{pageanchor=true}
\pagestyle{plain}
\pagenumbering{arabic}

\section{Introduction}
Let $G$ be a graph whose edge set $E(G)$ (in this paper, we assume that $G$ could have parallel edges) has been partitioned into groups called \emph{hedges}. 
In the \hcut problem, the goal is for a given integer $\ell$ to decide whether it is possible to remove at most   $\ell$ hedges to disconnect $G$.  The problem naturally appears in several applications.     In network survivability, \hcut, known as  
{\sc Global Label Min-Cut}, models the situation when a set of links may fail simultaneously \cite{CoudertDPRV07,CoudertPRV16,DBLP:conf/soda/GhaffariKP17,ZhangF16}.  In an alternate study, 
\cite{BlinBDRS14} introduced the problem under the name {\sc Minimum Cut in Colored Graphs} in their study of problems in phylogenetics. 
  \hcut generalizes the {\sc Minimum Cut} problem in graphs (where each hedge is a single edge) and hypergraphs (where each hedge is a spanning subgraph on the vertices of a hyperedge)~\cite{DBLP:conf/soda/GhaffariKP17,DBLP:conf/soda/ChandrasekaranX18}. 

From the algorithmic complexity perspective,  the  case of {\hcut} problem is quite curious. 
Whether the problem is polynomial-time solvable or NP-complete was an open question in the literature for some years. 
Ghaffari, Karger, and Panigrahi~\cite{DBLP:conf/soda/GhaffariKP17} gave a quasipolynomial-time randomized algorithm that solves the problem in time $n^{\OO(\log \ell)}$. While the existence of a quasipolynomial-time algorithm typically provides evidence that a problem is in P, the \hcut problem does not align with this pattern. Jaffke, Lima, Masar{\'{\i}}k, Pilipczuk, and Souza~\cite{DBLP:conf/soda/JaffkeLMPS23} proved the following surprising lower bounds. Let $m$ be the total number of hedges in the graph. Then the first result of Jaffke et al. states that unless the ETH fails, there is no algorithm solving \hcut in time $(mn)^{o(\log n/(\log{\log{n}})^2)}$. The second result of Jaffke et al. concerns the parameterized complexity of the problem. They showed that the problem is  \WOH parameterized by $m-\ell$, the number of hedges unused in an optimal cut. 

However, the parameterized complexity of \hcut for its most natural parameterization, the solution size $\ell$, remained open.
For a more general problem, finding a cut in a graph $G$ minimizing the value of a non-negative integer monotone submodular cost function defined on edges of $G$, Fomin, Golovach, Korhonen, Lokshtanov, and Stamoulis~\cite{DBLP:journals/talg/FominGKLS24} ruled out the existence of a fixed-parameter tractable (\FPT) algorithm parameterized by the optimum value of the cut. The lower bound of  Fomin et al. holds even when graph $G$ is planar. For a very special class of planar graphs, Fomin et al. designed an algorithm to solve the \hcut problem in time $2^{\OO(\ell \log{\ell})}n^{\OO(1)}$.\footnote{Fomin et al. state their \FPT result for  the {\sc Hedge Cycle} problem. Due to the duality of minimal cycles and cuts in planar graphs,  their result could be restated for \hcut.} 
  
We prove that \hcut is \FPT parameterized by $\ell$ by providing an algorithm for the more general \hkcut problem. In this problem, the goal is to delete edges of at most $\ell$ hedges such that the resulting graph has at least $k$ connected components. 
Chandrasekaran, Xu and Yu in~\cite[Corollary 2]{DBLP:conf/soda/ChandrasekaranX18,ChandrasekaranX21} gave an algorithm of running time 
$n^{\OO(k \log \ell)}$  for this problem. 
Our main result is the following theorem.

\begin{theorem}
\label{thm:main_algo}
There is a $\binom{\OO(k \log n) + \ell}{\ell} \cdot n^{\OO(k)} \cdot m^{\OO(1)}$ time randomized algorithm with one-sided error for \hkcut, where $\ell$ is the size of an optimum hedge $k$-cut-set, $n$ is the number of vertices, and $m$ is the number of hedges.
\end{theorem}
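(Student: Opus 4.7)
The plan is to extend the random-contraction framework of Karger, as adapted to hedges by Ghaffari-Karger-Panigrahi (GKP) and to $k$-cut by Chandrasekaran-Xu-Yu (CXY), and to convert the probabilistic guarantee into an enumeration of ``execution patterns,'' which is the source of the \FPT dependence on $\ell$. Fix an unknown optimum hedge $k$-cut-set $F$ of size $\ell$, inducing a vertex partition $V_1,\ldots,V_k$. The algorithm first enumerates a tuple of representatives $(r_1,\ldots,r_k)$ with $r_i\in V_i$, contributing the $n^{O(k)}$ factor and reducing the problem to contracting $G$ into the $k$-vertex multigraph on $\{r_1,\ldots,r_k\}$ without contracting any edge belonging to a hedge of~$F$.

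The main step is a recursive random-contraction procedure in the Karger-Stein style: repeatedly pick a uniformly random edge of the current graph, skipping those that would merge two distinct representatives, and contract it. Call a contraction \emph{good} if its edge lies in no hedge of $F$, and \emph{bad} otherwise. Every bad contraction identifies a new hedge of $F$, so there are at most $\ell$ bad contractions across the whole execution. The GKP/CXY analysis yields a per-step lower bound (roughly $1-O(1/\log n)$ in the hedge setting) on the probability that a random contraction is good, and combined with the Karger-Stein recursion this leaves $O(k\log n)$ good decisions to reach the target $k$-vertex graph. Rather than rely on low-probability success, I would enumerate execution patterns: a pattern specifies, among the $O(k\log n)$ total decisions and at most $\ell$ bad ones, the positions of the bad decisions and the hedge of $F$ hit at each. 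The number of such patterns is bounded by $\binom{O(k\log n)+\ell}{\ell}\cdot m^{O(1)}$, and for each pattern the prescribed contractions can be performed deterministically in polynomial time.

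The main obstacle is the quantitative analysis in the $k$-cut setting: showing that, conditioned on any history of partial contractions, a uniformly random edge is good with probability at least $1-O(1/\log n)$, while \emph{simultaneously} preserving all $\binom{k}{2}$ pairwise cuts between the eventual parts and not merely a single global cut as in the original GKP argument. This requires tracking the ratio of edges of hedges in $F$ to total edges in every intermediate graph and generalising the GKP ``hedge-sparsification'' inequality accordingly, probably by applying it separately to each pair $(V_i,V_j)$ and combining via a union bound that is absorbed into the $n^{O(k)}$ representative factor. Combined with the Karger-Stein recursion, the amplification by enumerated patterns, and the enumeration of representatives, this yields the bound of \Cref{thm:main_algo}.
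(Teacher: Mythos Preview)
Your proposal has a genuine gap at exactly the point you flag as ``the main obstacle,'' and the resolution you sketch does not address it.

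The claim that ``the GKP/CXY analysis yields a per-step lower bound (roughly $1-O(1/\log n)$) on the probability that a random contraction is good'' is false as stated. The entire difficulty in the hedge setting is that a single hedge can touch $\Theta(n)$ vertices, so a uniformly random edge has $\Omega(1)$ probability of lying in a hedge of $F$; this is why GKP get only quasipolynomial time, not polynomial. Your proposed fix---apply the inequality to each pair $(V_i,V_j)$ and union-bound---does nothing about a single large hedge of $F$ dominating the edge count. Without a per-step bound of this strength, the claim that only $O(k\log n)$ good steps plus $\ell$ bad steps suffice collapses, and so does the pattern-enumeration count. Moreover, even if the probabilistic statement held, your ``execution pattern'' does not determine the algorithm: it fixes the positions and identities of the bad contractions but says nothing about which good contractions to perform, so the sentence ``for each pattern the prescribed contractions can be performed deterministically'' is not justified.

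The paper's proof takes a structurally different route and does not enumerate representatives at all. It explicitly separates \emph{large} hedges ($|e|\ge n/(4k)$) from small ones. When a large hedge $e$ exists, the algorithm flips a biased coin---with probability $\frac{\ell}{\lfloor 8k\log n\rfloor+\ell}$ it deletes $e$ and decrements $\ell$, otherwise it contracts $e$ and shrinks $n$ multiplicatively. The binomial coefficient $\binom{O(k\log n)+\ell}{\ell}$ arises \emph{exactly} as the reciprocal of the probability that this biased coin matches the correct Cut/Contract sequence along the recursion; the identities $\frac{n}{k}\binom{n-1}{k-1}=\binom{n}{k}$ and $\frac{n}{n-k}\binom{n-1}{k}=\binom{n}{k}$ make the induction close tightly. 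When all hedges are small, each is contracted independently with probability $k/(2\ell)$; a degree-counting argument shows that with probability $e^{-O(k)}$ no hedge of $C$ is contracted while $n$ drops by a constant fraction, yielding the $n^{-O(k)}$ factor. The large/small dichotomy and the carefully biased coin are the missing ideas in your plan.
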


We note that this running time can be bounded by $c^{\ell} \cdot n^{\OO(k)} \cdot m^{\OO(1)}$ for any fixed constant $c>1$. 
This follows from considering the cases of (1) when $\ell = \Omega(k \log n)$, where $\binom{\OO(k \log n) + \ell}{\ell}$ can be bounded by $c^{\ell}$ for $c > 1$ arbitrarily close to $1$ depending on the constant in the $\Omega$-notation, and (2) when $\ell = \OO(k \log n)$, where $\binom{\OO(k \log n)+\ell}{\ell}$ can be bounded by $2^{\OO(k \log n)} = n^{\OO(k)}$.

For example, this implies that \hkcut is solvable in time $1.001^\ell \cdot  n^{\OO(k)} \cdot m^{\OO(1)}$. Thus for any fixed value $k$, the \hkcut problem is \FPT parameterized by the size $\ell$ of an optimum hedge $k$-cut-set. We refer to the book of Cygan et al. \cite{cygan2015parameterized} for an introduction to parameterized algorithms and complexity. 
Note that $\binom{\OO(k \log n) + \ell}{\ell}=\binom{\OO(k \log n) + \ell}{\OO(k \log n)}$ is also upper bounded by $n^{\OO(k \log \ell)}$, so this algorithm also runs in quasipolynomial-time for every fixed $k$, generalizing the results of~\cite{DBLP:conf/soda/GhaffariKP17,DBLP:conf/soda/ChandrasekaranX18}.

Our algorithm yields also a corresponding combinatorial upper bound on the number of optimum hedge $k$-cut-sets.

\begin{theorem}
\label{thm:main_bound}
The number of optimum hedge $k$-cut-sets of size $\ell$ is at most $\binom{\OO(k \log n) + \ell}{\ell} \cdot n^{\OO(k)}$.
\end{theorem}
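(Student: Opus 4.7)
The plan is to extract Theorem~\ref{thm:main_bound} as a direct combinatorial corollary of the algorithm of Theorem~\ref{thm:main_algo}. I expect this algorithm to have the shape of a recursive randomized branching procedure in which, at every recursive call, the algorithm picks one of a bounded number of options (for instance, sampling a hedge to cut, or a vertex/edge to contract) and then does polynomial local work. Unrolled as a decision tree, the algorithm determines its output cut-set as a deterministic function of its sequence of random choices, and the overall running time factors as (number of choice sequences) $\cdot$ (polynomial work per leaf).

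First I would inspect the algorithm underlying Theorem~\ref{thm:main_algo} and verify that the number of distinct choice sequences is bounded by $\binom{\OO(k \log n) + \ell}{\ell} \cdot n^{\OO(k)}$, with the remaining $m^{\OO(1)}$ factor in the running time accounting for per-leaf operations (such as testing $k$-connectivity and reading off the resulting cut). Each fixed choice sequence produces at most one candidate hedge $k$-cut-set, so the set of all candidates output across the choice tree has size at most $\binom{\OO(k \log n) + \ell}{\ell} \cdot n^{\OO(k)}$.

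Next, the correctness analysis of Theorem~\ref{thm:main_algo} must already guarantee that for every optimum hedge $k$-cut-set $F$, there exists at least one sequence of random choices under which the algorithm outputs exactly $F$; otherwise the algorithm could not have positive success probability on inputs whose unique optimum is $F$. Sending each optimum $F$ to a witnessing choice sequence reaching it gives an injection from the set of optimum hedge $k$-cut-sets into the set of choice sequences, from which the claimed bound follows.

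The main obstacle I anticipate is faithfully matching the abstract branching/leaf split to the actual algorithm: I will need to isolate the genuine branching contribution of each random step from the polynomial local overhead, and to check that every optimum cut-set is indeed realisable as a terminal candidate of some branch (rather than only being approximated by one). If any of the random choices are drawn from a very large domain with small individual success probability, I would rewrite that step as an explicit enumeration of the relevant cases before counting, so that the injection remains well-defined.
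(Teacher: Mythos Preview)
Your plan rests on the assumption that the algorithm behind Theorem~\ref{thm:main_algo} is a bounded-branching recursion whose leaves can be enumerated within the target bound. That assumption fails for the paper's algorithm. The core subroutine (Theorem~\ref{thm:main_aux_algo}) is a single polynomial-time randomized run; its ``small hedge'' step contracts each of the $m$ hedges independently with probability $k/(2\ell)$, so that step alone has $2^m$ possible outcomes, and it may recur $\Theta(\log n)$ times along a run. The running time in Theorem~\ref{thm:main_algo} is not (number of leaves) $\times$ (work per leaf); it is (number of independent repetitions needed for amplification) $\times$ (polynomial time per repetition). Hence the number of distinct choice sequences is \emph{not} bounded by $\binom{\OO(k \log n)+\ell}{\ell}\cdot n^{\OO(k)}$, and the injection you describe does not exist as stated. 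Your anticipated fix, rewriting the contraction step as an explicit enumeration of ``relevant cases,'' would amount to derandomizing that step while preserving the bound; the paper does not do this, and it is far from obvious how to.

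The paper's actual proof is a one-line probabilistic argument that sidesteps all of this. Theorem~\ref{thm:main_aux_algo} guarantees that for every optimum hedge $k$-cut-set $C$ of size $\ell$ the algorithm outputs \emph{exactly} $C$ with probability at least $p = n^{-\OO(k)}/\binom{\OO(k\log n)+\ell}{\ell}$. Since the algorithm outputs at most one set per run, the events ``output $=C$'' for distinct optima $C$ are pairwise disjoint, so their probabilities sum to at most $1$; hence the number of optima is at most $1/p$. No leaf-counting, no enumeration, and no structural assumption on the branching is required.
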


Earlier, Chandrasekaran, Xu and Yu in~\cite[Theorem 4]{DBLP:conf/soda/ChandrasekaranX18,ChandrasekaranX21} showed that the number of distinct minimum hedge $k$-cut-sets of size $\ell$ is upper bounded by $n^{\OO(k \log \ell)}$. 

Finally, we show that the $n^{\OO(k)}$ factor in our algorithm is unavoidable even when the problem is parameterized by the size $\ell$ of an optimum hedge $k$-cut-set.
In particular, we show that the problem is \WOH when parameterized by $k+\ell$.
This hardness result applies even in the setting when every hedge has one component, i.e., the case of \hykcut problem.

\begin{restatable}{theorem}{hardnesstheorem}\label{thm:lower}
The \hykcut problem parameterized by both $k$ and the cut size $\ell$ is {\rm \WOH}.
\end{restatable}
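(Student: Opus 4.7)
The plan is to prove \WOH by a parameterized reduction from Multicolored Clique, which is \WOC parameterized by the number $k$ of color classes. Given an instance $(G, V_1, \ldots, V_k)$ of Multicolored Clique with $|V_i| = n$, the goal is to construct in polynomial time a hypergraph $H$ together with integers $k'$ and $\ell'$, both bounded by $O(k^2)$, such that $G$ has a multicolored clique if and only if $H$ admits a set of at most $\ell'$ hyperedges whose removal leaves at least $k'$ connected components. Since the reduction produces parameters that are polynomial in $k$, this would transfer \WOH from Multicolored Clique to \hykcut parameterized by $k + \ell$.

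The construction I would use employs two types of gadgets. First, for each color class $V_i$, a \emph{selection gadget} centered at a new terminal vertex $t_i$: a large-rank hyperedge containing $t_i$ together with the vertices of $V_i$, supplemented by auxiliary small hyperedges so that cutting the gadget in a way that separates $t_i$ from a unique vertex $v_i \in V_i$ costs exactly one unit of the budget, while any other way of detaching $t_i$ costs strictly more. Second, for each pair $(i,j)$ of color classes, an \emph{edge-verification gadget} built from hyperedges indexed by the non-edges of $G$ between $V_i$ and $V_j$, arranged so that a joint selection of a non-adjacent pair $(v_i, v_j)$ creates residual connectivity that the budget cannot afford to sever. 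The target is set to $k' = k + \binom{k}{2} + 1$ components and $\ell' = k + \binom{k}{2}$ deletions (or some similar polynomial calibration), so that every hyperedge in the cut must be paid towards exactly one gadget.

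For the forward direction, a multicolored clique yields a cut of size exactly $\ell'$ by cutting the chosen selector hyperedge in each selection gadget and the corresponding "legal" hyperedge in each verification gadget. For the backward direction, I would argue that any cut of size at most $\ell'$ producing at least $k'$ components must (i) spend exactly one hyperedge on each selection gadget, thereby pinning down a unique selected vertex in each class, and (ii) spend exactly one hyperedge on each verification gadget, which is only possible when the selected pair is an edge of $G$.

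The main obstacle is designing the gadgets so that the backward implication is airtight. We must rule out cheap "cheating" cuts that reuse a single hyperedge across multiple gadgets, that exploit alternative disconnection patterns within a selection gadget, or that bypass a verification gadget using auxiliary edges. A standard remedy is to pad each gadget with parallel copies of key hyperedges (making them prohibitively expensive to cut entirely) and to equip each gadget with its own dedicated terminal vertices that are required to land in distinct components, so that every gadget independently contributes the intended number of components and the intended minimum cost to any feasible cut.
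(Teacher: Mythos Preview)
Your proposal is a plan rather than a proof: neither the selection gadget nor the edge-verification gadget is actually constructed. You describe the \emph{intended behaviour} of each gadget (``cutting the gadget in a way that separates $t_i$ from a unique vertex $v_i \in V_i$ costs exactly one unit of the budget, while any other way of detaching $t_i$ costs strictly more''; ``a joint selection of a non-adjacent pair $(v_i, v_j)$ creates residual connectivity that the budget cannot afford to sever'') but never specify the hyperedges that realise it. You yourself flag the backward direction as ``the main obstacle'' and list several failure modes (reusing a hyperedge across gadgets, alternative disconnection patterns, bypassing verification) without resolving any of them. Padding with parallel copies, which you suggest as a remedy, does not by itself control \emph{where} the new components appear, and with a component-count target of $k+\binom{k}{2}+1$ you must account for every component individually; nothing in the proposal does that bookkeeping. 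As written, the reduction is not yet well-defined, let alone proved correct.

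For comparison, the paper's reduction is much simpler and avoids gadgetry altogether. It reduces from ordinary \textsc{$k$-Clique}: take $V(G)$ together with one new vertex $v_e$ per edge $e\in E(G)$; add a size-two hyperedge $\{x,y\}$ for every pair of original vertices, and for each $x\in V(G)$ a single hyperedge $N_x=\{x\}\cup\{v_{xy}:y\in N_G(x)\}$. Set $\ell=k$ and the number of parts to $\binom{k}{2}+1$. The $\binom{n}{2}$ pairwise edges force all of $V(G)$ into one part (splitting them cuts at least $n-1>k$ hyperedges), so the remaining $\binom{k}{2}$ parts must each be a singleton $\{v_e\}$; the only hyperedges through $v_e$ are the two $N_x$'s of its endpoints, hence one needs $k$ vertices whose neighborhood hyperedges cover $\binom{k}{2}$ edge-vertices, i.e., a $k$-clique. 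There are no verification gadgets, no padding, and the backward argument is two lines. If you want to complete your approach you would need to exhibit concrete gadgets and carry out the full case analysis; alternatively, the paper's construction shows that none of that machinery is necessary.
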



\paragraph*{Techniques.}
Our algorithm follows the classical idea of randomized contractions for cut problems that was pioneered by Karger and Stein~\cite{DBLP:journals/jacm/KargerS96}, extended to \hcut by Ghaffari, Karger, and Panigrahi~\cite{DBLP:conf/soda/GhaffariKP17}, and to \hkcut by Chandrasekaran, Xu, and Yu~\cite{DBLP:conf/soda/ChandrasekaranX18}.
Roughly speaking, the randomized contractions approach for \hkcut is known to work with probability at least $n^{-\OO(k)}$ in the case when the algorithm never encounters \emph{large hedges}, i.e., hedges of size at least $c \frac{n}{k}$ for some constant $c \in (0,1)$~\cite{DBLP:conf/soda/GhaffariKP17,DBLP:conf/soda/ChandrasekaranX18}.

Therefore, the challenge is to find ways to handle a large hedge, when at some step of  the contraction algorithm we encounter a large hedge.  For instance, this could happen because of the contraction of hedges in earlier steps of the algorithm. (Which can make the number of vertices of the graph smaller while maintaining the sizes of some hedges, thus making some hedges large.) Here, our approach deviates from the approaches of~\cite{DBLP:conf/soda/GhaffariKP17,DBLP:conf/soda/ChandrasekaranX18}.
Both papers~\cite{DBLP:conf/soda/GhaffariKP17,DBLP:conf/soda/ChandrasekaranX18} are based on observing that if less than $1-\varepsilon$ fraction of the large hedges are in the cut-set, then contracting a random large hedge is successful with probability~$\varepsilon$.
This will inherently lead to an $\varepsilon^{\OO(k \log n)}$ success probability, which does not result in an \FPT algorithm.

Instead, we will select one large hedge, and randomly select what happens to this hedge: either it goes to the cut, in which case the parameter $\ell$ decreases, or it is contracted, in which case the number of vertices decreases by at least $c \frac{n}{k}$.
To select one of these options, we flip a biased coin with a carefully selected bias. Then it is possible to show that any fixed optimum cut $C$ of size $|C|=\ell$ will be returned with probability  $n^{-\OO(k)}/\binom{\OO(k \log n) + \ell}{\ell}$.
While obtaining an \FPT dependence on $\ell$ like this is quite natural, we find it a bit surprising that the same algorithm recovers also the quasipolynomial algorithms and combinatorial bounds of~\cite{DBLP:conf/soda/GhaffariKP17,DBLP:conf/soda/ChandrasekaranX18}.

The reduction in the proof of \Cref{thm:lower} builds on the approach of Marx~\cite{DBLP:journals/tcs/Marx06} developed for establishing the \WO-hardness of the vertex variant of the {\sc $k$-Way Cut} problem.

\paragraph*{Related work.} The  {\sc Minimum Cut}, also called  {\sc Global Minimum Cut}, is one of the most well-studied problem in graph algorithms. It led to the discovery of contraction based randomized algorithms for the problem, also called Karger's algorithm~\cite{Karger93}, and the recursive contraction algorithm of Karger and Stein~\cite{DBLP:journals/jacm/KargerS96}. Queyranne~\cite{DBLP:journals/mp/Queyranne98} gave an algorithm for minimizing symmetric submodular functions. Since hypergraph cut function is symmetric and submodular, see e.g. 
\cite{Rizzi00}, this implies a polynomial time algorithm for {\sc Hypergraph Cut}. In the last few years, there have been a lot of exciting developments in improving the running times of polynomial time algorithms for cuts in hypergraphs \cite{DBLP:conf/soda/GhaffariKP17,DBLP:conf/soda/ChekuriX17,DBLP:journals/talg/FoxPZ23}.
 The current fastest known polynomial time algorithm for  {\sc Hypergraph Cut} is given by 
Chekuri and Quanrud~\cite{DBLP:conf/icalp/ChekuriQ21a}.  However, the hedge cut function is not submodular and hence the generic approach based on submodular function minimization is not applicable for \hcut.

A generalization of   {\sc Minimum Cut} to {\sc $k$-Way Cut}, where the objective is to delete as few edges as possible to get at least $k$-connected components is one of the well studied problem in the literature.  Goldschmidt and Hochbaum~\cite{DBLP:journals/mor/GoldschmidtH94} showed that  the problem is \NPC, when $k$ is part of the input, and also gave a deterministic $n^{\OO(k^2)}$ time algorithm for {\sc $k$-Way Cut}. The latter result implies that {\sc  $k$-Way Cut} is polynomial-time solvable for every fixed $k$. Karger and Stein~\cite{DBLP:journals/jacm/KargerS96} designed  a faster randomized algorithm running in time $\tilde{\OO}(n^{2(k-1)})$\footnote{$\tilde{\OO}$~ hides the polylogarithimic factor in the running time.}. A series of deterministic algorithms culminates with 
Thorup's ingenious $\tilde{\OO}(n^{2k})$  time algorithm~\cite{DBLP:conf/stoc/Thorup08}. 
After a lull for some years, the problem has seen a lot of developments in recent years with respect to different algorithm paradigms meant for coping with \NPH problems.  The current best algorithm for the problem is given by He and Li~\cite{DBLP:conf/stoc/He022}
and runs in time  $\OO(n^{(1-\epsilon)k})$, where $\epsilon >0$ is an absolute constant.   Kawarabayashi and  Thorup \cite{KawarabayashiT11}  showed that  {\sc  $k$-Way Cut} is \FPT parameterized by the size of the cut edges, while the problem is known to be \WOH parameterized by $k$~\cite{DBLP:journals/entcs/DowneyEFPR03}. 


Chandrasekaran, Xu and Yu introduced the notion of  \hkcut in~\cite{DBLP:conf/soda/ChandrasekaranX18,ChandrasekaranX21} as the generalization of the  {\sc Hypergraph $k$-Cut}, another well-studied problem in the literature~\cite{DBLP:conf/focs/ChandrasekaranC20,BeidemanCW22}. They extended the result of Ghaffari, Karger, and Panigrahi~\cite{DBLP:conf/soda/GhaffariKP17} by showing that the number of minimum hedge $k$-cut-sets is $n^{\OO(k \log \ell)}$.

%

Finally, we mention some known results around \hcut (apart from those we discussed earlier~\cite{DBLP:conf/soda/GhaffariKP17,DBLP:conf/soda/JaffkeLMPS23,DBLP:journals/talg/FominGKLS24}). Blin, Bonizzoni, Dondi, Rizzi, and Sikora \cite{BlinBDRS14} gave a polynomial time algorithm producing an optimum hedge min cut with probability at least $n^{-2\ell}$. That is, they gave a randomized XP algorithm for \hcut parameterized by solution size $\ell$.  The quasipolynomial algorithm of Ghaffari, Karger, and Panigrahi~\cite{DBLP:conf/soda/GhaffariKP17} 
strongly improves over this running time. 
Coudert,   P{\'{e}}rennes,   Rivano, and
 Voge \cite{CoudertPRV16}  obtained an algorithm of running time $2^c \cdot n^{\OO(1)}$, where 
 $c$ is the number of hedges with a span larger than one (hedges whose edges induce more
than one connected component).
As it was proved by Fellows, Guo, and Kanj~\cite{DBLP:journals/jcss/FellowsGK10}, the related problem $(s,t)$-\hcut (computing a minimum number of hedges required to separate vertices $s$ and $t$) is  \WTH parameterized by the number of hedges in the  cut and is \WOH 
  parameterized by the number of edges in the  cut (see also~\cite{DBLP:journals/mst/MorawietzGKS22}).





\section{Preliminaries}
We use $\log$ to denote the natural logarithm and $[n]$ to denote the set $\{1,2,\ldots, n\}$.

Let us formally define hedge graphs.
A \emph{hedge graph} $G$ consists of a set of vertices $V(G)$ and a set of hedges $E(G)$.
Each hedge $e \in E(G)$ is a collection of disjoint subsets of $V(G)$, which are called the \emph{components} of the hedge. See \Cref{fig:example}.
We require each hedge component to be of size at least $2$ (components of size $1$ are irrelevant for the hedge cut problem).
We say that a vertex $v$ is incident with a hedge $e$ if $v$ is in some component of $e$.
The degree of a vertex is the number of hedges it is incident to.
We define the size $|e|$ of a hedge $e$ as the total number of vertices that are incident with it, i.e., the sum of the sizes of its components.

A \emph{hedge $k$-cut} of a hedge graph $G$ is a partition of $V(G)$ into $k$ non-empty sets.
A hedge $e$ is cut by a hedge $k$-cut if $e$ has a component that intersects at least two parts of the partition.
The hedge $k$-cut-set associated with a hedge $k$-cut is the set of hedges cut by the hedge $k$-cut.
A set of hedges $C \subseteq E(G)$ is a hedge $k$-cut-set if it is a hedge $k$-cut-set associated with some hedge $k$-cut.
An \emph{optimum} hedge $k$-cut-set of $G$ is a smallest hedge $k$-cut-set of $G$. In particular, if every hedge has exactly one component, then hedge $k$-cut is a $k$-cut of the hypergraph, whose hyperedges are the hedges of $G$. 

  \begin{figure}[ht]
 \center{\includegraphics[scale=0.8]{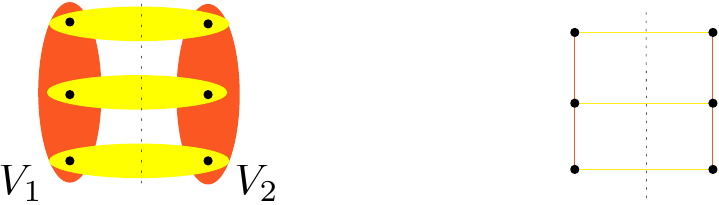}}
 \caption{Example of a hedge graph on 6 vertices. It has two hedges, yellow and red. Both hedges are of size 6. The red hedge consists of two components, each containing 3 elements. The yellow hedge has 3 components, each of size 2. The hedge 2-cut-set associated with  a 2-cut $(V_1, V_2)$ consists of all components of the yellow hedge.   }\label{fig:example}
 \end{figure}
 
These definitions of hedge graphs and hedge $k$-cut are equivalent to the definitions given in the introduction by setting for each hedge the components of that hedge be the connected components of the graph spanned by the edges in this hedge, and ignoring components of size $1$, i.e., vertices not incident to any edge of that hedge. See the graph in the right half of \Cref{fig:example}.

The contraction of a hedge $e$ means that for each component of the hedge, all occurrences of vertices of that component are identified into one vertex in all of the hedges of $G$.
This may cause some components of other hedges not to be disjoint anymore, in which case these components are merged together.
Then, all components of size one in hedges are deleted, and empty hedges, including, at this point, the contracted hedge $e$, are deleted.
Observe that because all components of a hedge have at least $2$ vertices, the contraction of $e$ decreases the number of vertices by at least $|e|/2$.
In the viewpoint where the hedges are sets of edges, the contraction of a hedge corresponds to contracting all edges of the hedge.

We say that a hedge graph $G'$ is a contraction of a hedge graph $G$ if $G'$ is obtained from $G$ by a series of contractions.
Observe that, like with contractions in graphs, the order of the contractions does not matter.
We also observe that the following properties of contractions and cuts carry over from graphs to hedge graphs.

\begin{observation}
\label{lem:contrmainten}
If $G'$ is a hedge graph that is a contraction of a hedge graph $G$, then any hedge $k$-cut-set of $G'$ is also a hedge $k$-cut-set of $G$.
\end{observation}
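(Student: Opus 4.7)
My plan is to induct on the number of elementary contractions, so it suffices to treat the case $G' = G/e^*$ where $e^*$ is a single hedge of $G$. Given a hedge $k$-cut-set $C' \subseteq E(G')$ witnessed by a $k$-partition $(V_1', \ldots, V_k')$ of $V(G')$, I will produce a $k$-partition $(V_1, \ldots, V_k)$ of $V(G)$ whose associated hedge $k$-cut-set is exactly $C'$; since $C' \subseteq E(G) \setminus \{e^*\}$, this shows $C'$ is a hedge $k$-cut-set of $G$.

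The lifted partition will be obtained by \emph{undoing} the identifications. Write the components of $e^*$ as $E_1, \ldots, E_s$ and let $v_j \in V(G')$ denote the identified vertex for $E_j$. I will set
\[
V_i \;=\; (V_i' \cap V(G)) \;\cup\; \bigcup_{j \,:\, v_j \in V_i'} E_j,
\]
and each $V_i$ is non-empty because each $V_i'$ is. For the easy direction, any hedge $f \neq e^*$ with $f \notin C'$ stays uncut in $G$: each component $F_r$ of $f$ in $G$ maps (via replacing vertices of $E_j$ with $v_j$) into a single component $F'$ of $f$ in $G'$, which lies in some $V_p'$, so $F_r \subseteq V_p$. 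Likewise each $E_j$ is contained in the $V_p$ indicated by $v_j$, so $e^*$ itself is uncut.

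The hard part will be the reverse direction: showing that every $f \in C'$ is actually cut by $(V_1, \ldots, V_k)$ in $G$. The obstacle is that a component $F'$ of $f$ in $G'$ may arise as the merge of several original components $F_{r_1}, \ldots, F_{r_t}$ of $f$ in $G$, glued together via shared identified vertices $v_l$. To navigate this I will form the auxiliary bipartite graph on the merged $F_r$'s and those $v_l$ appearing in $F'$, with an edge $F_r \sim v_l$ whenever $F_r \cap E_l \neq \emptyset$; this graph is connected by the construction of $F'$. If some single $F_r$ already intersects two distinct parts $V_p$ in $G$, then $f$ is cut and we are done. Otherwise each $F_r$ sits in a unique part $\pi(F_r)$, and since $F'$ meets both $V_i'$ and $V_j'$, there exist $F_r \subseteq V_i$ and $F_{r'} \subseteq V_j$ among the merged components. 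Along any path from $F_r$ to $F_{r'}$ in the auxiliary graph, any two consecutive $F_a, F_b$ share some $v_l$; since $E_l \subseteq V_p$ for the unique $p$ with $v_l \in V_p'$ and both $F_a, F_b$ meet $E_l$, we get $\pi(F_a) = \pi(F_b) = p$. Hence $\pi$ is constant along the path, contradicting $\pi(F_r) = i \neq j = \pi(F_{r'})$, so some $F_r$ must already be cut. This completes the induction.
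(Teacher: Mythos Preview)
The paper states this as an observation without proof, treating it as a routine fact that carries over from the graph setting. Your argument is correct and supplies the details the paper omits; in particular, the bipartite connectivity argument for the ``hard direction'' (showing each $f \in C'$ is genuinely cut by the lifted partition) is a clean way to handle the merging of components under contraction.

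One small omission to patch: in the easy direction you write that every $f \neq e^*$ with $f \notin C'$ stays uncut because ``each component $F_r$ of $f$ in $G$ maps \ldots\ into a single component $F'$ of $f$ in $G'$, which lies in some $V_p'$''. This implicitly assumes $f \in E(G')$, but contraction can delete hedges other than $e^*$ --- those all of whose merged components become singletons. Such an $f$ satisfies $f \notin C'$ (since $C' \subseteq E(G')$) yet has no component in $G'$ to point to. The fix is immediate: if the image of some original component $F_r$ collapses to a singleton, that singleton must be an identified vertex $v_j$ (it cannot be an original vertex since $|F_r| \ge 2$), whence $F_r \subseteq E_j \subseteq V_p$ for the unique $p$ with $v_j \in V_p'$. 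The same remark covers individual singleton pre-components of a surviving hedge. With this one-line addition, your lifted partition cuts exactly $C'$ and the proof is complete.
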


\begin{observation}
\label{lem:optmaint2}
Let $C$ be a hedge $k$-cut-set of $G$, and $G'$ be obtained from $G$ by contracting a hedge $e \notin C$.
Then $C$ is also a hedge $k$-cut-set of $G'$.
\end{observation}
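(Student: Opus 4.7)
The plan is to lift a witnessing $k$-partition of $V(G)$ for $C$ through the contraction of $e$ to a $k$-partition of $V(G')$ whose associated hedge cut-set is again $C$. Fix a partition $(V_1,\dots,V_k)$ of $V(G)$ into $k$ nonempty sets whose associated cut-set is $C$. Since $e\notin C$, each component $D$ of $e$ lies wholly inside a unique part $V_{\iota(D)}$. The contraction of $e$ replaces each such $D$ by a single new vertex $v_D$ and leaves vertices outside $U:=\bigcup_{D} D$ untouched, so I would set
\[
V_i' \;=\; (V_i\setminus U)\cup\{v_D : \iota(D)=i\}.
\]

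I would first verify that $(V_1',\dots,V_k')$ is a partition of $V(G')$ into $k$ nonempty sets. Disjointness and coverage follow from the construction. For nonemptiness: either $V_i$ contains a vertex outside $U$ (which stays in $V_i'$), or $V_i\subseteq U$, in which case the nonemptiness of $V_i$ forces $V_i$ to contain some full component $D$ of $e$, giving $v_D\in V_i'$.

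The main work is to show that the cut-set of $(V_1',\dots,V_k')$ in $G'$ equals $C$. Fix a hedge $f\neq e$ with image $f'$ in $G'$. The key observation is that for every component $P$ of $f$, the parts touched by the image of $P$ in $G'$ coincide with the parts touched by $P$ in $G$: an uncontracted vertex $u\in P\cap V_i$ stays in $V_i'$, while a contracted vertex $u\in P\cap V_i$ lies in some $D\subseteq V_{\iota(D)}$, and $u\in V_i$ forces $\iota(D)=i$, so its image $v_D\in V_i'$. Two distinct components $P_1,P_2$ of $f$ merge in $G'$ precisely when they share some $v_D$; this forces both to meet $D\subseteq V_{\iota(D)}$, so by chaining, a merged component of $f'$ touches the union of parts touched in $G$ by its constituents, which is a single part as soon as each constituent touches only one.

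Putting this together, $f'$ is cut in $G'$ by $(V_1',\dots,V_k')$ iff some component of $f$ touches $\geq 2$ parts in $G$, i.e., iff $f\in C$, so the induced cut-set in $G'$ is precisely $C$. The most delicate aspect of the argument is keeping track of how components of other hedges merge under the contraction, but everything works because each identification happens inside a component of $e$ that lives in a single part, so no merger creates or destroys a cut. A minor auxiliary point is that whenever $f\in C$, the cut-witnessing component has size $\geq 2$ in $G'$ and survives the singleton-deletion step, since two of its vertices lying in distinct parts must map to distinct vertices of $G'$.
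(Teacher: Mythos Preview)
Your argument is correct. The paper states this as an observation without proof, simply noting that the usual facts about contractions and cuts carry over from graphs to hedge graphs; your write-up supplies the details by pushing a witnessing partition through the contraction and verifying that each surviving hedge is cut afterward iff it was cut before. The only place that reads a bit compressed is ``which is a single part as soon as each constituent touches only one'': the union of several singletons is a singleton only if they coincide, and that is precisely what the chaining via $D$ gives you---two components of $f$ that merge through some $v_D$ both meet $D\subseteq V_{\iota(D)}$ and hence (being uncut) lie entirely in $V_{\iota(D)}$, so by transitivity all constituents of a merged component sit in one common part. With that made explicit, the proof is complete.
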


\section{Algorithm}\label{sec:algo}
In this section, we give our algorithm that implies both \Cref{thm:main_algo,thm:main_bound}.
In particular, this section is devoted to the proof of the following theorem.

\begin{theorem}
\label{thm:main_aux_algo}
There is a polynomial-time randomized algorithm, that given integers $\ell$ and $k$ and a hedge graph with $n$ vertices that has an optimum hedge $k$-cut-set $C$ of size $|C| = \ell$, outputs $C$ with probability at least $n^{-\OO(k)} /\binom{\OO(k \log n) + \ell}{\ell}$.
\end{theorem}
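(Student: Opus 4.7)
The approach is to graft an explicit large-hedge decision step onto the classical randomized-contraction algorithm of Ghaffari--Karger--Panigrahi and Chandrasekaran--Xu--Yu. Fix a small constant $c \in (0,1)$, and at any point in the execution call a hedge \emph{large} if its size is at least $cn/k$ in the current contracted graph on $n$ vertices, and \emph{small} otherwise. The algorithm maintains a set $S$ of hedges ``placed into the cut'' (initially empty) and iteratively modifies the current graph until the partition of $V(G)$ induced by the contractions has at least $k$ parts, at which point it returns $S$ (padded with a handful of remaining hedges if necessary to produce a valid hedge $k$-cut-set). By \Cref{lem:contrmainten,lem:optmaint2}, the invariant ``the fixed optimum $C$ minus $S$ remains a hedge $k$-cut-set of the current graph'' is preserved whenever we contract a hedge outside $C$ or move a hedge of $C$ into $S$.

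At each iteration, if the current graph has a large hedge $e$, the algorithm flips a biased coin with $\Pr[\text{cut}] = p := \ell/(\ell+T)$ and $\Pr[\text{contract}] = 1-p$, where $T = \OO(k \log n)$ is an a-priori upper bound on the number of large-hedge contractions possible in any run (each such contraction removes at least $|e|/2 \ge cn/(2k)$ vertices, shrinking the vertex count by a factor of at least $1 - c/(2k)$). On outcome ``cut'', we add $e$ to $S$ and delete $e$ from the graph; on outcome ``contract'', we contract $e$. If the current graph has no large hedge, we execute one step of the standard GKP/CXY random-contraction rule, which we recall samples a hedge to contract in a way that preserves any fixed $k$-cut-set with probability $1 - \OO(k/n)$ under the precondition that all hedges are small.

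For the probability analysis, fix an optimum hedge $k$-cut-set $C$ with $|C|=\ell$. Call a large-hedge step \emph{correct} if the coin outputs ``cut'' exactly when $e \in C$, and call a small-hedge step correct if the contracted hedge lies outside $C$. In any run, the total number of correct large-hedge coin flips with outcome ``cut'' is at most $\ell$, and the number with outcome ``contract'' is at most $T$. Hence the probability that all large-hedge coins are correct is at least $p^\ell (1-p)^T$, which by the standard maximum of the binomial distribution at $p = \ell/(\ell+T)$ equals $\binom{T+\ell}{\ell}^{-1}$ up to a $\mathrm{poly}(T+\ell)$ factor; absorbing the polynomial factor into the $\OO(\cdot)$ in the numerator yields $\binom{\OO(k\log n)+\ell}{\ell}^{-1}$. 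Simultaneously, each small-hedge step is executed in a graph where all hedges are small and $C$ remains a hedge $k$-cut-set (by the preserved invariant), so the per-step success probability $1 - \OO(k/n)$ applies; multiplying over the at most $n$ small-hedge steps gives the residual $n^{-\OO(k)}$ factor. Multiplying the two contributions yields the claimed bound.

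The main technical obstacle is to verify that the GKP/CXY small-hedge per-step bound can be applied in this interleaved setting, i.e., that the statement ``a random GKP/CXY contraction in a graph where every hedge has size less than $cn/k$ preserves any fixed hedge $k$-cut-set $C$ with probability $1 - \OO(k/n)$'' is a purely local property of the current graph, not of the contraction history. Inspecting the arguments of \cite{DBLP:conf/soda/GhaffariKP17,DBLP:conf/soda/ChandrasekaranX18}, the estimate depends only on the hedge-size distribution and on $C$ being a $k$-cut-set of the current graph, both of which are preserved by construction after every large-hedge step. Once this modular lemma is in hand, the remainder is bookkeeping: choosing $c$, bounding the geometric-series count of large-hedge contractions by $T = \OO(k \log n)$, and absorbing the polynomial factor from the binomial estimate into the $\OO(\cdot)$ inside the numerator.
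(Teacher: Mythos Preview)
Your high-level strategy---flip a biased coin on each large hedge with bias $p=\ell/(\ell+T)$, $T=\Theta(k\log n)$, and fall back to random contraction when all hedges are small---is exactly the paper's approach, and your large-hedge bookkeeping (at most $\ell$ ``cut'' flips, at most $T$ ``contract'' flips, so success probability $\ge p^{\ell}(1-p)^{T}\approx\binom{T+\ell}{\ell}^{-1}$) is correct. The paper uses an adaptive bias $\ell'/(\lfloor 8k\log n'\rfloor+\ell')$ and an inductive analysis in place of your fixed bias and multiplicative analysis, but these are equivalent.

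The gap is in the small-hedge part. Your claimed per-step bound $1-\OO(k/n)$ does \emph{not} hold for the GKP/CXY edge-sampling rule when hedge sizes are allowed up to $cn/k$. In that regime the edges lying in hedges of $C$ can number $\Theta(\ell\cdot n/k)$, while the total edge count is only $\Theta(n\ell/k)$ (from the degree lower bound), so a uniformly random edge lands in $C$ with \emph{constant} probability, not $\OO(k/n)$. Multiplied over $\Theta(n)$ single-edge contractions this is exponentially small, so the black-box you invoke does not deliver $n^{-\OO(k)}$ in the interleaved setting. Relatedly, your assertion that the GKP/CXY estimate ``depends only on the hedge-size distribution and on $C$ being a $k$-cut-set'' is false: the lower bound on total edge mass comes from the degree bound, which in turn needs $C\setminus S$ to be an \emph{optimum} $k$-cut-set of the current graph. (It is---because any smaller cut-set $C'$ of the current graph would give $C'\cup S$ of size $<\ell$ in the original---but you neither state nor use this.)

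The paper does not rely on a per-step $1-\OO(k/n)$ bound at all. Instead, in the small-hedge case it contracts every hedge independently with probability $k/(2\ell)$ and proves directly (via a ``good vertex'' counting argument using optimality of $C$) that with probability $\ge e^{-\OO(k)}$ no hedge of $C$ is contracted \emph{and} $n$ drops by a constant factor. Over $\OO(\log n)$ such rounds this yields the $n^{-\OO(k)}$ factor, and the inductive framework cleanly interleaves these rounds with the large-hedge coin flips. To repair your argument you would need to replace the per-step black box with an analysis of this kind.
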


\Cref{thm:main_aux_algo} gives the algorithm of \Cref{thm:main_algo} by running it $\binom{\OO(k \log n) + \ell}{\ell} \cdot n^{\OO(k)}$ times with increasing values of $\ell$, until a solution is found.
It implies the bound of \Cref{thm:main_bound} because the probabilities of different optimum hedge $k$-cut-sets $C$ must sum up to at most $1$.

Let us remark that the polynomial-time in \Cref{thm:main_aux_algo} means polynomial in $n+m$, where $n$ is the number of vertices and $m$ is the number of hedges.
In particular, we remark that because the graph can have parallel edges, $m$ could be unbounded as a function of $n$.

We will first describe the algorithm of \Cref{thm:main_aux_algo} in \Cref{subsec:algo:desc} and then prove its correctness in \Cref{subsec:algo:analysis}.

\subsection{Description of the algorithm}
\label{subsec:algo:desc}
The algorithm is given a hedge graph $G$ with $n = |V(G)| \ge 1$ vertices, and integers $\ell \ge 0$ and $k \ge 2$.
It returns either \fail or a hedge $k$-cut-set consisting of $\ell$ hedges.

A high-level algorithm description is given in pseudocode \Cref{alg:main}.
We note that the algorithm is a bit unintuitive, because it can return \fail if it reaches the conclusion that the size of an optimum hedge $k$-cut-set is smaller than $\ell$, particularly at \Cref{alg:main:unintuitive}.
This unintuitive behavior is needed for proving our combinatorial bound in \Cref{thm:main_bound}.

\begin{algorithm}[t]
\caption{Algorithm for \hkcut\label{alg:main}}
\textbf{Input:} Hedge graph $G$ with $n = |V(G)|$ vertices, integers $\ell$ and $k$.\\
\textbf{Output:} Either a hedge $k$-cut-set of size exactly $\ell$, or \fail.
\begin{algorithmic}[1]
\If {$n < k$} \Return \fail
\EndIf
\If {$n \le 4k$ or $\ell \le k$} \Return Brute-force($G$, $\ell$, $k$) (\Cref{lem:bruteforce})
\EndIf
\If {Exists $k-1$ vertices with degree $\le \ell/k$} \Return \fail\label{alg:main:unintuitive}
\EndIf
\If {Exists a hedge $e$ of size $|e| \ge n/(4k)$}
\State Choose one of the following with probabilities $\frac{\ell}{\lfloor 8k \cdot \log n \rfloor + \ell}$ and $\frac{\lfloor 8k \cdot \log n \rfloor}{\lfloor 8k \cdot \log n \rfloor + \ell}$, respectively:
\begin{enumerate}
\item Remove $e$ and call the algorithm recursively with $\ell$ decremented by $1$
\item Contract $e$ and call the algorithm recursively
\end{enumerate}
\Else
\State Let $G'$ be the hedge graph obtained by contracting each hedge with probability $k/(2\ell)$
\If {$|V(G')| \le n-n/80$}
\State Call the algorithm recursively on $G'$
\Else
\State \Return \fail
\EndIf
\EndIf
\end{algorithmic}
\end{algorithm}

Next, we give a detailed description of the algorithm.
The algorithm starts by handling some degenerate cases.
First, if $n < k$, then no hedge $k$-cut exists and we return \fail.
Then, if $n \le 4k$ or $\ell \le k$, a ``brute-force'' approach without recursion is used.
This case is handled by the following lemma.

\begin{lemma}
\label{lem:bruteforce}
There is a polynomial-time algorithm, that given a hedge graph with $n$ vertices and two integers $k$ and $\ell$ satisfying either $n \le 4k$ or $\ell \le k$, outputs any optimum hedge $k$-cut-set of size $\ell$ with probability at least $n^{-4k}$.
\end{lemma}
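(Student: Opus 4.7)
The plan is to handle the two alternatives of the hypothesis separately.

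In the case $n\le 4k$, I would assign each vertex of $G$ independently and uniformly at random to a part in $[k]$, giving a random $k$-partition $(U_1,\dots,U_k)$, and return the associated hedge cut-set (failing if its size differs from $\ell$). For any fixed optimum cut-set $C^*$ with an associated partition $(V_1^*,\dots,V_k^*)$, the probability that every vertex of $V_i^*$ is assigned to part $i$ is exactly $k^{-n}$, and on that event the algorithm outputs exactly $C^*$. Since $k\le n\le 4k$, we have $k^n \le n^n \le n^{4k}$, so $k^{-n}\ge n^{-4k}$, which gives the required bound.

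In the case $\ell\le k$ (we may assume $n>4k$, as otherwise the previous case applies), the bound $k^{-n}$ from random coloring can be much smaller than $n^{-4k}$, so a different strategy is needed. I would apply a Karger-style randomized hedge-contraction procedure: while $|V(G)|>k$, sample a uniformly random hedge of the current graph and contract it; at the end, return the hedges that survived. By \Cref{lem:optmaint2}, as long as no hedge of $C^*$ is ever contracted, the final cut-set equals $C^*$. An analysis akin to the contraction arguments for hypergraph $k$-cut and hedge min-cut then shows that the probability of avoiding every hedge of $C^*$ is at least $n^{-\OO(k)}$, and the constants can be tuned so that this is at least $n^{-4k}$.

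The main obstacle is the contraction analysis in the second case, since a Karger-style argument over hedges cannot rely on a global minimum-degree lower bound. The algorithm might encounter a ``large'' hedge of size $\Omega(n/k)$, whose contraction is not a good random step; following the main algorithm, I would dispatch such a hedge by a biased coin toss---either putting it into the cut (decrementing $\ell$) or contracting it---which contributes a factor of $n^{-\OO(k)}$ across the $\OO(k\log n)$ large-hedge events that can occur. The remaining small-hedge phases then admit a clean random-contraction analysis, and multiplying the two contributions yields the claimed success probability of $n^{-4k}$.
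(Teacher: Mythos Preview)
Your treatment of the case $n\le 4k$ is correct and matches the paper exactly.

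For the case $\ell\le k$, however, your approach diverges sharply from the paper and has a genuine gap. The paper's argument here is a three-line direct construction: sample $\ell$ ordered pairs $(a_i,b_i)$ of distinct vertices uniformly at random, and output the set of all hedges having a component containing some $\{a_i,b_i\}$. Any optimum hedge $k$-cut-set $C=\{e_1,\dots,e_\ell\}$ can be recovered this way, since for each $e_j$ one may take $a_j,b_j$ in the same component of $e_j$ but in different parts of the associated $k$-cut; every hedge with a component containing $\{a_j,b_j\}$ is then cut by that partition and hence lies in $C$, while each $e_j$ is certainly included. The success probability is at least $(n^2)^{-\ell}=n^{-2\ell}\ge n^{-2k}$.

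Your proposal instead re-imports the entire machinery of the main algorithm (random contraction plus biased-coin handling of large hedges) into what is supposed to be its base case. Beyond the structural awkwardness, the claimed bound is not supported: the large-hedge biased-coin phase contributes a factor $1/\binom{\lfloor 8k\log n\rfloor+\ell}{\ell}$, and the small-hedge contraction phase --- which in the paper's own analysis yields $n^{-400k}$, not $n^{-\OO(1)\cdot k}$ with a small constant --- is left as ``a clean random-contraction analysis'' without any argument. The assertion that ``the constants can be tuned so that this is at least $n^{-4k}$'' is precisely the content that needs proving, and the main algorithm's analysis does not come close to that constant. In short, the second case is neither analyzed nor does the sketched route deliver the stated $n^{-4k}$; the paper's elementary pair-guessing argument avoids all of this.
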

\begin{proof}
If $n \le 4k$, we assign each vertex a number from $[k]$ uniformly at random, and if the partition of the vertices resulting from this assignment gives a hedge $k$-cut with cut-set of size $\ell$, we return this cut-set.
This returns any optimum hedge $k$-cut-set of size $\ell$ with probability at least $k^{-n}$ which is at least $n^{-4k}$ because $k \le n \le 4k$.

If $\ell \le k$, we choose $\ell$ pairs of vertices $(a_1, b_1), \ldots, (a_\ell, b_\ell)$ with $a_i \neq b_i$ uniformly at random (with repetitions possible), and include to the cut-set all hedges that contain a component that is a superset of $\{a_i, b_i\}$ for some $i \in [\ell]$.
If this gives a hedge $k$-cut-set of size $\ell$, then we return it.
Because every optimum hedge $k$-cut-set of size $\ell$ can be characterized by a selection of such $\ell$ pairs of vertices, this returns any optimum hedge $k$-cut-set of size $\ell$ with probability at least $(n^2)^{-\ell} = n^{-2\ell} \ge n^{-2k}$. 
\end{proof}

Then, if there are $k-1$ vertices with degree at most $\ell/k$, we have that cutting all hedges incident with  these vertices gives a hedge $k$-cut-set of size at most $\frac{(k-1) \ell}{k} < \ell$.
In this case, we know that an optimum hedge $k$-cut-set has size less than $\ell$, and therefore we return \fail (note that the algorithm is allowed to return \fail whenever an optimum hedge $k$-cut-set has size less than $\ell$).
We observe that in the remaining cases we have that at least $n-(k-1) \ge \frac{3n}{4}$ vertices have degrees at least $\ell/k$.

We say that a hedge $e$ is \emph{large} if $|e| \ge n/(4k)$ and \emph{small} otherwise.
The algorithm works differently depending on whether there is at least one large hedge, or if all hedges are small.

If there exists a large hedge, we select arbitrarily a large hedge $e$ and probabilistically ``branch'' on $e$ in the following sense.
We choose from the options $\{\cut, \contract\}$ the option \cut with probability $\frac{\ell}{\lfloor 8k \cdot \log n \rfloor + \ell}$ and the option \contract otherwise (i.e., with probability $\frac{\lfloor 8k \cdot \log n \rfloor}{\lfloor 8k \cdot \log n \rfloor + \ell}$).
If we chose \cut, we apply the algorithm recursively on the instance with hedge $e$ removed and $\ell$ decreased by one, and if the recursive call returns a cut-set $C$, we return the union $C \cup \{e\}$, and if it returns \fail we also return \fail.
If we chose \contract, we apply the algorithm recursively on the instance with hedge $e$ contracted, and return whatever the recursive call returns.

Then, if there is no large hedge, we contract every hedge with probability $k/(2\ell)$ independently from each other to construct a contracted hedge graph $G'$.
If the contracted graph $G'$ has at most $n-n/80$ vertices, we recursively call the algorithm on $G'$ and return whatever the recursive call returns.
If $G'$ has more than $n-n/80$ vertices, we return \fail.

It is easy to see that the algorithm works in polynomial time.
It follows from \Cref{lem:contrmainten} that the algorithm always returns either a hedge $k$-cut-set of size $\ell$, or \fail.
It remains to lower bound the probability of success.

\subsection{Analysis of the algorithm}
\label{subsec:algo:analysis}
Let us now fix an optimum hedge $k$-cut-set $C$ of size $|C|=\ell$.
We will show by induction that the probability that the algorithm returns $C$ is at least $f(\ell, k, n) \cdot g(k, n)$, where the functions $f$ and $g$ are as follows:

\[f(\ell, k, n) = 1/\binom{\lfloor 8k \cdot \log n \rfloor + \ell}{\ell} \qquad \text{ and } \qquad g(k, n) = n^{-400k}.\]

Intuitively, the function $f(\ell, k, n)$ will handle the probability coming from the large hedge branching, and the function $g(k, n)$ the probability from the small hedge contraction.
The constants in these functions are chosen to optimize the simplicity of the proof instead of the running time; we believe that they could be significantly optimized with a more careful proof.
Observe that the functions $f$ and $g$ are non-increasing in all of their parameters, in particular, $f(\ell+x, k+y, n+z) \le f(\ell, k, n)$ and $g(k+x, n+y) \le g(k, n)$ hold for any non-negative $x,y,z$.

We then prove that the probability of returning $C$ is at least $f(\ell, k, n) \cdot g(k, n)$ by induction.
We first consider the base cases when the algorithm returns immediately without recursion.
Then, we prove the correctness probability by induction on $\ell+k+n$.

\paragraph*{Base cases.}
When $n < k$, no such $C$ exists.
When $n \le 4k$ or $\ell \le k$, the algorithm was described in \Cref{lem:bruteforce} and at the same time analysed to return $C$ with probability at least $n^{-4k} \ge g(k,n) \ge f(\ell,k,n) \cdot g(k,n)$.
When there are $k-1$ vertices with degree $\le \ell/k$, the optimum hedge $k$-cut-set has size less than $\ell$, and therefore again, no such $C$ exists.

\paragraph*{Large hedges.}
Now we prove the correctness probability by induction in the case when there exists a large hedge, i.e., a hedge $e$ of size at least $|e| \ge n/(4k)$.
Let $e$ be a large hedge arbitrarily chosen by the algorithm.
First, we consider the case when $e \in C$.
In this case, $C \setminus \{e\}$ is an optimum hedge $k$-cut-set of the hedge graph with the hedge $e$ removed.
Therefore, in this case, the algorithm returns $C$ if it chooses to cut $e$, and the recursive call returns $C \setminus \{e\}$.
By induction, this happens with a probability at least

\begin{align*}
&\frac{\ell}{\lfloor 8k \cdot \log n \rfloor + \ell} \cdot f(\ell-1, k, n) \cdot g(k, n)\\
&= g(k, n) /\left(\frac{\lfloor 8k \cdot \log n \rfloor + \ell}{\ell} \cdot \binom{\lfloor 8k \cdot \log n \rfloor + \ell-1}{\ell-1}\right)\\
&= g(k, n) /\binom{\lfloor 8k \cdot \log n \rfloor + \ell}{\ell} && {\text{{by} } \frac{n}{k} \cdot \binom{n-1}{k-1} = \binom{n}{k}}\\
&= f(\ell, k, n) \cdot g(k, n).
\end{align*}

So the induction assumption holds in that case.

The other case is that $e \notin C$.
In this case, by \Cref{lem:contrmainten,lem:optmaint2}, we have that $C$ is an optimum hedge $k$-cut-set of the graph obtained by contracting $e$.
Therefore, the algorithm is correct if it contracts $e$ and the recursive call returns $C$.
Note that because $|e| \ge n/(4k)$, the number of vertices of the contracted graph is at most $n-n/(8k)$.
Therefore, the overall correctness probability is by induction at least

\begin{align*}
&\frac{\lfloor 8k \cdot \log n \rfloor}{\lfloor 8k \cdot \log n \rfloor + \ell} \cdot f(\ell, k, n-n/(8k)) \cdot g(k, n-n/(8k))\\
&\ge g(k,n) /\left(\frac{\lfloor 8k \cdot \log n \rfloor + \ell}{\lfloor 8k \cdot \log n \rfloor} \cdot \binom{\lfloor 8k \cdot (\log n + \log (1-1/(8k))) \rfloor + \ell}{\ell} \right)\\
&\ge g(k,n) /\left(\frac{\lfloor 8k \cdot \log n \rfloor + \ell}{\lfloor 8k \cdot \log n \rfloor} \cdot \binom{\lfloor 8k \cdot (\log n - 1/(8k)) \rfloor + \ell}{\ell} \right) \quad \text{ by } \log (1-1/(8k)) \le -1/(8k)\\
&= g(k,n) /\left(\frac{\lfloor 8k \cdot \log n \rfloor + \ell}{\lfloor 8k \cdot \log n \rfloor} \cdot \binom{\lfloor 8k \cdot \log n \rfloor-1 + \ell}{\ell} \right)\\
&=g(k,n) /\binom{\lfloor 8k \cdot \log n \rfloor + \ell}{\ell} \qquad \text{ by } \frac{n}{n-k} \cdot \binom{n-1}{k} = \binom{n}{k}\\
&=f(\ell, k, n) \cdot g(k,n).
\end{align*}
So the induction assumption holds also in this case, which concludes the proof that it holds in the case when a large hedge exists.

\paragraph*{Small hedges.}
Now we bound the probability in the case when there are no large hedges, i.e., in the case when every hedge $e$ has $|e| < n/(4k)$.

First, we observe that with probability at least $e^{-k}$, no hedge in $C$ is contracted.

\begin{lemma}
Let $C$ be a set of hedges with $|C| = \ell$ and $\ell > k$.
Contracting every hedge with probability $k/(2\ell)$ does not contract any hedge in $C$ with probability at least $e^{-k}$.
\end{lemma}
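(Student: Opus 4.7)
The statement is a direct elementary probability estimate, so the plan is short and computational. The events ``hedge $h$ gets contracted'' are independent across $h \in E(G)$, and each has probability exactly $k/(2\ell)$. Hence the probability that none of the $\ell$ hedges of $C$ is contracted equals
\[
\left(1 - \frac{k}{2\ell}\right)^{\ell}.
\]
Everything reduces to bounding this quantity from below by $e^{-k}$.

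The main tool I would use is the inequality
\[
1 - x \;\ge\; e^{-2x} \qquad \text{for all } x \in [0, 1/2].
\]
To verify it, I would consider $f(x) = (1-x) - e^{-2x}$, note that $f(0) = 0$, check that $f'(x) = -1 + 2e^{-2x}$ is positive on $[0, (\ln 2)/2)$ and negative afterwards, and finally observe that $f(1/2) = 1/2 - e^{-1} > 0$; together these facts show $f \ge 0$ on $[0,1/2]$. Applying this with $x = k/(2\ell)$, which lies in $(0, 1/2)$ since the assumption $\ell > k$ (and $k \ge 2$) gives $k/(2\ell) < 1/2$, yields
\[
1 - \frac{k}{2\ell} \;\ge\; e^{-k/\ell}.
\]
Raising both sides to the power $\ell$ then produces
\[
\left(1 - \frac{k}{2\ell}\right)^{\ell} \;\ge\; e^{-k},
\]
which is exactly what we need.

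There is essentially no obstacle here; the only mild subtlety is ensuring that $k/(2\ell) \le 1/2$ so that the inequality $1-x \ge e^{-2x}$ is applicable, and that is immediate from the hypothesis $\ell > k$. One could also prove the statement using $\ln(1-x) \ge -2x$ on $[0,1/2]$ or via the classical bound $(1 - 1/n)^n \ge 1/e \cdot (\cdot)$-type estimate, but the one-line route above seems the cleanest.
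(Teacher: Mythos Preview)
Your proof is correct and follows exactly the same route as the paper: compute the probability as $(1-k/(2\ell))^{\ell}$, use the inequality $1-k/(2\ell)\ge e^{-k/\ell}$ (valid since $\ell>k$ forces $k/(2\ell)<1/2$), and raise to the $\ell$-th power. The paper simply states the key inequality without the auxiliary calculus justification you supply, but the argument is otherwise identical.
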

\begin{proof}
First, because $\ell > k$, we have that the probability that any given hedge is not contracted is $1-k/(2\ell) \ge e^{-k/\ell}$.
Therefore, the probability that none of the $\ell$ hedges in the solution are contracted is at least $e^{-k}$.
\end{proof}

Then, we will show that subject to the assumption that no hedge in $C$ is contracted, there is a high probability that the number of vertices decreases by at least $n/80$.

We say that a vertex is \emph{light} if it is incident with at most $\ell/(2k)$ hedges in $C$.
Because the sum of sizes of hedges in $C$ is at most $\frac{\ell \cdot n}{4k}$, at least $n/2$ vertices are light.
We say that a vertex is \emph{good} if it is light and its degree is at least $\ell/k$.
Because at least $\frac{3n}{4}$ vertices have degree at least $\ell/k$, at least $n/4$ vertices are good.

\begin{lemma}
If a vertex is good, then the probability that a hedge incident with this vertex is contracted conditional to no hedge in $C$ being contracted is at least $1/5$.
\end{lemma}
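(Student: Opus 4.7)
The plan is to use the independence of the contractions together with the definition of a good vertex. Let $v$ be a good vertex, let $d$ denote its degree, and let $c$ denote the number of hedges in $C$ incident with $v$. By the definition of good, $d \geq \ell/k$ and $c \leq \ell/(2k)$, so the number of hedges incident with $v$ that are \emph{not} in $C$ is at least
\[
d - c \;\geq\; \frac{\ell}{k} - \frac{\ell}{2k} \;=\; \frac{\ell}{2k}.
\]

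Next, I would observe that since the hedges are contracted independently, the event ``no hedge of $C$ is contracted'' is determined entirely by the contraction outcomes of hedges in $C$, and is thus independent of the contraction outcomes of hedges outside $C$. Consequently, conditioning on this event does not change the marginal distribution of which hedges outside $C$ get contracted: each such hedge is independently contracted with probability $k/(2\ell)$. Note that we may assume $\ell > k$, since otherwise we are in a base case of the algorithm handled by \Cref{lem:bruteforce}, and in particular $k/(2\ell) < 1/2 \le 1$.

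The main step is then a direct calculation. The conditional probability that \emph{none} of the at least $\ell/(2k)$ non-$C$ hedges incident with $v$ is contracted is at most
\[
\left(1 - \frac{k}{2\ell}\right)^{\ell/(2k)} \;\leq\; e^{-\frac{k}{2\ell} \cdot \frac{\ell}{2k}} \;=\; e^{-1/4}.
\]
Since $e^{-1/4} < 4/5$, the conditional probability that \emph{some} hedge incident with $v$ is contracted is at least $1 - 4/5 = 1/5$, as required.

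I do not anticipate a real obstacle here: the only slightly subtle ingredient is the independence observation that lets us pass from the conditional probability to the same Bernoulli model on non-$C$ hedges. Everything else is a one-line inequality using $1 - x \leq e^{-x}$, combined with the two numerical bounds coming from the definition of a good vertex.
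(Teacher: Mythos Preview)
Your proof is correct and follows essentially the same argument as the paper: both compute that a good vertex is incident with at least $\ell/(2k)$ hedges outside $C$, note that conditioning on no hedge of $C$ being contracted leaves the contractions of non-$C$ hedges unchanged, and bound the probability of no incident contraction by $(1-k/(2\ell))^{\ell/(2k)} \le e^{-1/4} \le 4/5$. Your write-up is slightly more explicit about the independence step and the $\ell>k$ assumption, but the approach is identical.
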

\begin{proof}
Let $u$ be a good vertex.
First, note that even when conditioning to no hedge in $C$ being contracted, the contractions of all of the hedges not in $C$ happen with probability $k/(2\ell)$ and independently of each other.
Therefore, because $u$ is incident with at least $\ell/(2k)$ hedges not in $C$, the probability that no hedge incident with $u$ is contracted is at most
\[\left(1-\frac{k}{2\ell}\right)^{\ell/(2k)} \le e^{-\frac{k}{2 \ell} \frac{\ell}{2k}} \le e^{-1/4} \le 4/5.\]
Therefore, the probability that at least one hedge incident with $u$ is contracted is at least $1/5$.
\end{proof}

Because the number of good vertices is at least $n/4$, the expected number of vertices that are incident with a contracted hedge is at least $n/4 \cdot 1/5 \ge n/20$.
Therefore, because the maximum number of vertices incident with a contracted edge is $n$, we have that with probability at least $1/40$, at least $n/40$ vertices are incident with a contracted hedge.
In this case, the number of vertices decreases by at least $n/80$, so the algorithm makes a recursive call on the contracted instance.

We have shown that with probability at least $e^{-k}/40 \ge e^{-k-4}$, the contraction step is correct in that it does not contract any hedge in $C$, and it decreases the number of vertices by at least $n/80$.
Therefore, the overall probability that $C$ is returned is by induction at least
\begin{align*}
&e^{-k-4} \cdot f(\ell,k,n-n/80) \cdot g(k,n-n/80)\\
&\ge f(\ell,k,n) \cdot e^{-k-4} \cdot (n-n/80)^{-400k}\\
&\ge f(\ell,k,n) \cdot e^{-k-4-400k(\log n + \log(1-1/80))}\\
&\ge f(\ell,k,n) \cdot e^{-k-4-400k(\log n - 1/80)}\\
&\ge f(\ell,k,n) \cdot e^{-k-4-400k \log n + 5k}\\
&\ge f(\ell,k,n) \cdot e^{-400k \log n} \ge f(\ell,k,n) \cdot n^{-400k}.
\end{align*}

So the induction assumption holds also in this case, which concludes the proof of \Cref{thm:main_aux_algo}.

\section{Hardness}
\label{sec:hardness}
In this section, we show that the \hykcut problem is \WOH when parameterized by $k$ and the cut size $\ell$. The proof follows the idea of Marx~\cite{DBLP:journals/tcs/Marx06} for the \WO-hardness of the vertex variant of the {\sc $k$-Way Cut} problem. 




\hardnesstheorem*
\begin{proof}
As we have introduced notation only for the context of hedge graphs, let us work in the setting of hedge graphs where every hedge has one component, which is equivalent to the setting of hypergraphs.

Following Marx~\cite{DBLP:journals/tcs/Marx06}, we reduce from the {\sc $k$-Clique} problem. Let $G$ be an $n$-vertex graph and $k\geq 1$ be an integer. We assume without loss of generality that $n\geq 2k+1$ (otherwise, we add isolated vertices). We construct the hedge graph $G'$ as follows.
\begin{itemize}
\item For each edge $e\in E(G)$, construct a vertex $v_e$, and define $V(G')=V(G)\cup\{v_e\mid e\in E(G)\}$.
\item For each $x\in V(G)$, define a hedge $N_x=\{\{x\}\cup\{v_{xy}\mid y\in N_G(x)\}\}$.
\item For each pair $\{x,y\}$ of distinct $x,y\in V(G)$, make $\{\{x,y\}\}$ a hedge.
\end{itemize}
Then we set $\ell=k$ and $h=\binom{k}{2}+1$. 

We claim that $G$ has a clique of size $k$ if and only if $G'$ has a hedge $h$-cut of size at most $\ell$. 

Assume that $G$ has a $k$-clique $\{x_1,\ldots, x_k\}$. Consider the $h$-cut of $G'$ formed by the sets $V_{i,j}=\{v_{x_ix_j}\}$ for all $i,j\in[k]$ such that $i<j$ and $V^*=V(G')\setminus \big(\bigcup_{1\leq i<j\leq k}V_{ij}\big)$. 
Then $C=\{N_{x_1},\ldots, N_{x_k}\}$ is the hedge $h$-cut-set for this hedge $h$-cut and $|C|=k=\ell$. 

For the opposite direction, suppose that $G'$ has an $h$-cut $(V_1,\ldots,V_h)$ such that the corresponding $h$-cut-set $C$ is of size at most $\ell$.  First, we observe that there is $i\in[h]$ such that $V(G)\subseteq V_i$. To see this, assume, for the sake of contradiction, that some $V_i$ contains $1\leq p<n$ vertices of $G$. Then $C$ contains at least $p(n-p)\geq n-1\geq 2k>k$ hedges from the set of hedges formed by the pairs of vertices of $G$ contradicting the assumption that $|C|\leq \ell=k$. We assume without loss of generality that $V(G)\subseteq V_h$. For each $i\in[h-1]$, we select an arbitrary $u_i\in V_i$. Because each $u_i\notin V(G)$, we have that $u_i=v_{e_i}$ for some edge $e_i\in E(G)$. Denote by $X$ the set of endpoints of the edges $e_1,\ldots,e_{h-1}$ in $G$. Notice that for each $x\in X$, $N_x\in C$. Thus, $|X|\leq k$. We have that $h-1=\binom{k}{2}$ distinct edges of $G$ have their endpoints in the set $X$ of size at most $k$. This implies that $X$ is a clique of size $k$.  
\end{proof}
\section{Conclusion}\label{section:conclusion}
In this paper we gave an \FPT algorithm for the \hcut problem.
We conclude with some additional remarks and open problems.

It is natural to ask whether \hcut has a polynomial kernel.
We conjecture that the problem does not admit a polynomial kernel because it \emph{OR-composes} (see~\cite{DBLP:journals/jcss/BodlaenderDFH09,fomin2019kernelization}) to itself in the following sense.
Given a sequence of multiple instances of \hcut, each with the same parameter $\ell$, we can in polynomial-time produce a hedge graph that admits a hedge cut-set of size $\ell$ if and only if at least one of the input instances admits a hedge cut-set of size $\ell$.
This instance can be produced by simply gluing the input instances together in one vertex.
If the hedge cut problem were \NP-hard, then this composition would imply that the problem would not have polynomial kernel parameterized by $\ell$ under standard complexity assumptions for kernelization lower bounds~\cite{DBLP:journals/jcss/BodlaenderDFH09}.
This provides evidence that the problem does not admit a polynomial kernel and that perhaps this evidence could be formalized by adapting the framework for kernelization lower bounds to weaker complexity assumptions.

As for the running time, we have no evidence that the running time dependence on $\ell$ in our \FPT algorithm for \hcut is optimal.
We do not know any lower bound ruling out the existence of an algorithm of running time  $\ell^{\OO(\log\ell)} (n+m)^{\OO(1)}$.
While we obtain a $c^{\ell} (n+m)^{\OO(1)}$ running time for any fixed $c > 1$, the question whether this running time could be improved  to $2^{o(\ell)} (n+m)^{\OO(1)}$ remains open. Another interesting open question is whether our algorithm could be derandomized.

For simple graphs, that is, graphs without parallel edges, {\sc $k$-Way Cut} is fixed-parameter tractable parameterized by the size of the cut-set $\ell$ by Thorup and Kawarabayashi~\cite{KawarabayashiT11}.
This implies that for simple planar graphs, the problem is \FPT parameterized by $k$.
(Just because $\ell\leq 6k$ due to degeneracy.)
We ask whether \hkcut is \FPT on simple planar graphs parameterized by $k$?
\bibliographystyle{alpha}
\bibliography{book_kernels_fvf}

\newcommand{\etalchar}[1]{$^{#1}$}
\begin{thebibliography}{MGKS22}

\bibitem[BBD{\etalchar{+}}14]{BlinBDRS14}
Guillaume Blin, Paola Bonizzoni, Riccardo Dondi, Romeo Rizzi, and Florian
  Sikora.
\newblock Complexity insights of the minimum duplication problem.
\newblock {\em Theor. Comput. Sci.}, 530:66--79, 2014.

\bibitem[BCW22]{BeidemanCW22}
Calvin Beideman, Karthekeyan Chandrasekaran, and Weihang Wang.
\newblock Deterministic enumeration of all minimum $k$-cut-sets in hypergraphs
  for fixed $k$.
\newblock In {\em Proceedings of the 2022 {ACM-SIAM} Symposium on Discrete
  Algorithms (SODA)}, pages 2208--2228, 2022.

\bibitem[BDFH09]{DBLP:journals/jcss/BodlaenderDFH09}
Hans~L. Bodlaender, Rodney~G. Downey, Michael~R. Fellows, and Danny Hermelin.
\newblock On problems without polynomial kernels.
\newblock {\em J. Comput. Syst. Sci.}, 75(8):423--434, 2009.

\bibitem[CC20]{DBLP:conf/focs/ChandrasekaranC20}
Karthekeyan Chandrasekaran and Chandra Chekuri.
\newblock Hypergraph $k$-cut for fixed $k$ in deterministic polynomial time.
\newblock In {\em Proceedings of the 61st {IEEE} Annual Symposium on
  Foundations of Computer Science (FOCS)}, pages 810--821. {IEEE}, 2020.

\bibitem[CDP{\etalchar{+}}07]{CoudertDPRV07}
David Coudert, Pallab Datta, Stephane Perennes, Herv{\'{e}} Rivano, and
  Marie{-}Emilie Voge.
\newblock Shared risk resource group complexity and approximability issues.
\newblock {\em Parallel Process. Lett.}, 17(2):169--184, 2007.

\bibitem[CFK{\etalchar{+}}15]{cygan2015parameterized}
Marek Cygan, Fedor~V. Fomin, Lukasz Kowalik, Daniel Lokshtanov, D{\'a}niel
  Marx, Marcin Pilipczuk, Micha{\l} Pilipczuk, and Saket Saurabh.
\newblock {\em Parameterized Algorithms}.
\newblock Springer, 2015.

\bibitem[CPRV16]{CoudertPRV16}
David Coudert, St{\'{e}}phane P{\'{e}}rennes, Herv{\'{e}} Rivano, and
  Marie{-}Emilie Voge.
\newblock Combinatorial optimization in networks with shared risk link groups.
\newblock {\em Discret. Math. Theor. Comput. Sci.}, 18(3), 2016.

\bibitem[CQ21]{DBLP:conf/icalp/ChekuriQ21a}
Chandra Chekuri and Kent Quanrud.
\newblock Isolating cuts, (bi-)submodularity, and faster algorithms for
  connectivity.
\newblock In {\em Proceedings of the 48th International Colloquium on Automata,
  Languages, and Programming (ICALP)}, volume 198 of {\em LIPIcs}, pages
  50:1--50:20. Schloss Dagstuhl - Leibniz-Zentrum f{\"{u}}r Informatik, 2021.

\bibitem[CX17]{DBLP:conf/soda/ChekuriX17}
Chandra Chekuri and Chao Xu.
\newblock Computing minimum cuts in hypergraphs.
\newblock In {\em Proceedings of the 28th Annual {ACM-SIAM} Symposium on
  Discrete Algorithms (SODA)}, pages 1085--1100. {SIAM}, 2017.

\bibitem[CXY18]{DBLP:conf/soda/ChandrasekaranX18}
Karthekeyan Chandrasekaran, Chao Xu, and Xilin Yu.
\newblock Hypergraph \emph{k}-cut in randomized polynomial time.
\newblock In {\em Proceedings of the 29th Annual {ACM-SIAM} Symposium on
  Discrete Algorithms (SODA)}, pages 1426--1438. {SIAM}, 2018.

\bibitem[CXY21]{ChandrasekaranX21}
Karthekeyan Chandrasekaran, Chao Xu, and Xilin Yu.
\newblock Hypergraph $k$-cut in randomized polynomial time.
\newblock {\em Math. Program.}, 186(1):85--113, 2021.

\bibitem[DEF{\etalchar{+}}03]{DBLP:journals/entcs/DowneyEFPR03}
Rodney~G. Downey, Vladimir Estivill{-}Castro, Michael~R. Fellows, Elena
  Prieto{-}Rodriguez, and Frances~A. Rosamond.
\newblock Cutting up is hard to do: the parameterized complexity of k-cut and
  related problems.
\newblock In {\em Computing: the Australasian Theory Symposiumm (CATS)},
  volume~78 of {\em Electronic Notes in Theoretical Computer Science}, pages
  209--222. Elsevier, 2003.

\bibitem[FGK10]{DBLP:journals/jcss/FellowsGK10}
Michael~R. Fellows, Jiong Guo, and Iyad~A. Kanj.
\newblock The parameterized complexity of some minimum label problems.
\newblock {\em J. Comput. Syst. Sci.}, 76(8):727--740, 2010.

\bibitem[FGK{\etalchar{+}}24]{DBLP:journals/talg/FominGKLS24}
Fedor~V. Fomin, Petr~A. Golovach, Tuukka Korhonen, Daniel Lokshtanov, and
  Giannos Stamoulis.
\newblock Shortest cycles with monotone submodular costs.
\newblock {\em {ACM} Trans. Algorithms}, 20(1):2:1--2:16, 2024.

\bibitem[FLSZ19]{fomin2019kernelization}
Fedor~V Fomin, Daniel Lokshtanov, Saket Saurabh, and Meirav Zehavi.
\newblock {\em Kernelization: {Theory} of parameterized preprocessing}.
\newblock Cambridge University Press, 2019.

\bibitem[FPZ23]{DBLP:journals/talg/FoxPZ23}
Kyle Fox, Debmalya Panigrahi, and Fred Zhang.
\newblock Minimum cut and minimum \emph{k}-cut in hypergraphs via branching
  contractions.
\newblock {\em {ACM} Trans. Algorithms}, 19(2):13:1--13:22, 2023.

\bibitem[GH94]{DBLP:journals/mor/GoldschmidtH94}
Olivier Goldschmidt and Dorit~S. Hochbaum.
\newblock A polynomial algorithm for the k-cut problem for fixed k.
\newblock {\em Math. Oper. Res.}, 19(1):24--37, 1994.

\bibitem[GKP17]{DBLP:conf/soda/GhaffariKP17}
Mohsen Ghaffari, David~R. Karger, and Debmalya Panigrahi.
\newblock Random contractions and sampling for hypergraph and hedge
  connectivity.
\newblock In {\em Proceedings of the 28th Annual {ACM-SIAM} Symposium on
  Discrete Algorithms (SODA)}, pages 1101--1114. {SIAM}, 2017.

\bibitem[HL22]{DBLP:conf/stoc/He022}
Zhiyang He and Jason Li.
\newblock Breaking the {$n^k$} barrier for minimum \emph{k}-cut on simple
  graphs.
\newblock In {\em Proceedings of the 54th Annual {ACM} {SIGACT} Symposium on
  Theory of Computing (STOC)}, pages 131--136. {ACM}, 2022.

\bibitem[JLM{\etalchar{+}}23]{DBLP:conf/soda/JaffkeLMPS23}
Lars Jaffke, Paloma~T. Lima, Tom{\'{a}}s Masar{\'{\i}}k, Marcin Pilipczuk, and
  U{\'{e}}verton~S. Souza.
\newblock A tight quasi-polynomial bound for global label min-cut.
\newblock In {\em Proceedings of the 2023 {ACM-SIAM} Symposium on Discrete
  Algorithms (SODA)}, pages 290--303. {SIAM}, 2023.

\bibitem[Kar93]{Karger93}
David~R. Karger.
\newblock Global min-cuts in rnc, and other ramifications of a simple min-cut
  algorithm.
\newblock In {\em Proceedings of the Fourth Annual {ACM/SIGACT-SIAM} Symposium
  on Discrete Algorithms (SODA)}, pages 21--30. {ACM/SIAM}, 1993.

\bibitem[KS96]{DBLP:journals/jacm/KargerS96}
David~R. Karger and Clifford Stein.
\newblock A new approach to the minimum cut problem.
\newblock {\em J. {ACM}}, 43(4):601--640, 1996.

\bibitem[KT11]{KawarabayashiT11}
Ken{-}ichi Kawarabayashi and Mikkel Thorup.
\newblock The minimum k-way cut of bounded size is fixed-parameter tractable.
\newblock In {\em Proceedings of the {IEEE} 52nd Annual Symposium on
  Foundations of Computer Science (FOCS)}, pages 160--169. {IEEE} Computer
  Society, 2011.

\bibitem[Mar06]{DBLP:journals/tcs/Marx06}
D{\'a}niel Marx.
\newblock Parameterized graph separation problems.
\newblock {\em Theoretical Computer Science}, 351(3):394--406, 2006.

\bibitem[MGKS22]{DBLP:journals/mst/MorawietzGKS22}
Nils Morawietz, Niels Gr{\"{u}}ttemeier, Christian Komusiewicz, and Frank
  Sommer.
\newblock Refined parameterizations for computing colored cuts in edge-colored
  graphs.
\newblock {\em Theory Comput. Syst.}, 66(5):1019--1045, 2022.

\bibitem[Que98]{DBLP:journals/mp/Queyranne98}
Maurice Queyranne.
\newblock Minimizing symmetric submodular functions.
\newblock {\em Math. Program.}, 82:3--12, 1998.

\bibitem[Riz00]{Rizzi00}
Romeo Rizzi.
\newblock {NOTE} - on minimizing symmetric set functions.
\newblock {\em Comb.}, 20(3):445--450, 2000.

\bibitem[Tho08]{DBLP:conf/stoc/Thorup08}
Mikkel Thorup.
\newblock Minimum k-way cuts via deterministic greedy tree packing.
\newblock In {\em Proceedings of the 40th Annual {ACM} Symposium on Theory of
  Computing (STOC)}, pages 159--166. {ACM}, 2008.

\bibitem[ZF16]{ZhangF16}
Peng Zhang and Bin Fu.
\newblock The label cut problem with respect to path length and label
  frequency.
\newblock {\em Theor. Comput. Sci.}, 648:72--83, 2016.

\end{thebibliography}
\end{document}